\documentclass{eptcs}
\usepackage{amsmath,amssymb,amsthm,proof}
\newcommand{\itn}{\mathbf{IT}(\mathbb{N})}
\newcommand{\itnd}{\overline{\mathbf{IT}}(\mathbb{N})}

\newcommand{\A}[1]{\mathbf{A}[#1]}
\newcommand{\PR}{\mathrm{PR}}

\newcommand{\proves}[1]{\vdash^{\kern -4pt\mbox{\raise 2pt \hbox{\tiny\itshape #1}}}}

\newtheorem{theorem}{Theorem}
\newtheorem{lemma}[theorem]{Lemma}

\hypersetup{colorlinks=true}
\title{Provably Total Functions of Arithmetic with Basic Terms}
\author{Evgeny Makarov
\institute{INRIA\\Orsay, France}
\email{emakarov@gmail.com}
}

\begin{document}
\maketitle

\begin{abstract}
A new characterization of provably recursive functions of first-order
arithmetic is described. Its main feature is using only basic terms, i.e., terms
consisting of $\mathsf{0}$, the successor $\mathsf{S}$ and variables in the quantifier
rules, namely, universal elimination and existential introduction.
\end{abstract}

\section{Introduction}
\label{sec:1}
This paper presents a new characterization of provably recursive
functions of first-order arithmetic. We consider functions defined by
sets of equations. The equations can be arbitrary, not
necessarily defining primitive recursive, or even total, functions.
The main result states that a function is provably recursive iff its
totality is provable (using natural deduction) from the defining set
of equations, with one restriction: only terms consisting of $\mathsf{0}$, the
successor $\mathsf{S}$ and variables can be used in the inference rules
dealing with quantifiers, namely universal elimination and existential
introduction. We call such terms \emph{basic}.

Provably recursive functions is a classic topic in proof
theory~\cite{buss:1998}. Let $T(e,\vec{x},y)$ be an arithmetic formula
expressing that a deterministic Turing machine with a code $e$ terminates on inputs
$\vec{x}$ producing a computation trace with code $y$.
A function $f$ is a provably
recursive function of an arithmetic theory $T$ if
\begin{equation}
\label{eq:prf}
T\vdash\forall\vec{x}\,\exists y\;T(e,\vec{x},y)
\end{equation}
for the code $e$ of some Turing machine that computes $f$. In other words,
$f$ is provably recursive if the termination of its algorithm is provable
in $T$.

The class of provably recursive functions of $T$ can serve as a measure of
$T$'s strength. For example, almost all usual functions on natural
numbers are provably recursive in Peano Arithmetic (PA). In contrast, when
induction is limited to $\Sigma_1$-formulas, the set of provably recursive
functions coincides with the set of primitive recursive
functions~\cite{buss:1998}. Studying provably recursive functions is also
useful because a function that is computable but not provably recursive in
$T$ gives rise to a true formula~\eqref{eq:prf} that is independent of
$T$.

In~\cite{leivant:2002}, Leivant proposed a characterization of provably
recursive function of PA using a formalism for reasoning about inductively
generated data called \emph{intrinsic theories}. The intrinsic theory of
natural numbers has a unary data-predicate $\mathsf{N}$, which is
supposed to mean that its argument is a natural number. Unlike PA,
intrinsic theories don't use functional symbols other than the constructors
($\mathsf{0}$ and $\mathsf{S}$ in the case of natural numbers). Thus,
provably recursive functions can be characterized using only constructors
and the data-predicate. Our result goes in the same direction by
additionally replacing the data-predicate with restrictions on quantifier rules.

A deduction system with such restrictions can be considered as a way of
reasoning about non-denoting terms. A set of equations $P$ can define
non-total functions over natural numbers, and a deduction system with regular quantifier rules has
quantified variables ranging over all, not necessarily denoting, terms. For
example, a formula $\forall x\,\exists y\;f(x)=y$ is trivially provable in
a regular system regardless of the definition of $f$: we start by
$f(x)=f(x)$, introduce the existential quantifier to get
$\exists y\;f(x)=y$ and the universal quantifier to get
$\forall x\,\exists y\;f(x)=y$. In contrast, allowing only basic eigenterms in
the quantifier rules makes quantifiers range over terms denoting
natural numbers. The main result of this paper is that the formula
$\forall x\,\exists y\;f(x)=y$ is provable with this restriction iff $f$ is provably recursive.
One direction is proved using intrinsic theories; the other is proved
directly, but also following the reasoning of a similar statement
in~\cite{leivant:2002}.

The structure of the paper is the following. In the next section, relevant
definitions are given. Sect.~\ref{sec:3} shows that provably recursive
functions of PA are provably total when quantifier rules are restricted to
basic terms, and Sect.~\ref{sec:4} proves the converse.


\section{Definitions}
\label{sec:2}
Let $P$ be a set of first-order equations. Let ${\cal L}$ be the language
of $P$ plus a constant $\mathsf{0}$ and a unary functional symbol
$\mathsf{S}$ (if they are not already used in $P$). The theory $\A{P}$ is a
first-order theory with equality in the language ${\cal L}$. The axioms of
$\A{P}$ are the universal closures of the equations in $P$, denoted by
$\forall P$, the separation axioms $\forall x\;\mathsf{S}(x)\ne\mathsf{0}$
and $\forall x,y\;\mathsf{S}(x)=\mathsf{S}(y)\to x=y$, and induction
\[
A[\mathsf{0}]\to\forall x\;(A[x]\to A[\mathsf{S}(x)])\to\forall x\;A[x]
\]
for all formulas $A$ in ${\cal L}$. The inference rules are the usual rules
of classical natural deduction (see, e.g., \cite{troelstra:2000}) plus the
rules of equality:
\[
\frac{A[t]\quad t=s}{A[s]}\qquad\frac{}{t=t}
\]
for all formulas $A$ and terms $t,s$ in ${\cal L}$ ($A[s]$ is
obtained from $A[t]$ by replacing some occurrences of $t$ by
$s$). The natural deduction rules dealing with quantifiers are shown in
Fig.~1. It is easy to see that the rules of equality make it a congruence.

For example, let $\text{AM}$ be the usual axioms for addition
and multiplication and let $\PR$ be the set of standard defining
equations for all primitive recursive functions. Then $\A{\text{AM}}$
is Peano Arithmetic and $\A{\PR}$ is Peano Arithmetic with
all primitive recursive functional symbols.

\begin{figure}
\begin{gather*}
\begin{gathered}[b]
\infer[({\forall}I)]{\forall x\;A[x]}{A[y]}\\
\vtop{\hsize=3cm
  \centerline{$y$ is not free in open assumptions}}
\end{gathered}
\qquad\qquad
\begin{gathered}[b]
\infer[({\forall}E)]{A[t]}{\forall x\;A[x]}\\
\text{$t$ is free for $x$ in $A$}
\end{gathered}\\[1ex]
\begin{gathered}[b]
\infer[({\exists}I)]{\exists x\;A[x]}{A[t]}\\
\text{$t$ is free for $x$ in $A$}
\end{gathered}
\qquad\qquad
\begin{gathered}[b]
\infer[({\exists}E)]{C}{\exists x\;A[x] &
\infer*{C}{A[y]}}\\
\vtop{\hsize=3cm
\centerline{$y$ is not free in $C$}}
\end{gathered}
\end{gather*}
\caption{Quantifier rules of natural deduction}
\end{figure}

A \emph{program} is a pair $(P,\mathsf{f})$ consisting of a set of
equations $P$ and a functional symbol $\mathsf{f}$ occurring in $P$.
(When $\mathsf{f}$ is clear from the context or
is irrelevant, we will write $P$ instead of $(P,\mathsf{f})$.)

We use programs to define functions using an analog of
Herbrand-G\"odel computability (see \cite{kleene:1952,leivant:2002}).
Given a program $P$, we write $P\proves{=}E$ if $E$ is an equation
derivable from $P$ in equational logic. The rules of equational logic
are the following:
\begin{enumerate}
\item $P\proves{\tiny =}E$ for every $E\in P$;
\item $P\proves{\tiny =}t=t$ for every term $t$;
\item if $P\proves{\tiny =}E[x]$, then $P\proves{\tiny =}E[t]$ for every term $t$
  and a variable $x$;
\item if $P\proves{\tiny =}s[t]=r[t]$ and $P\proves{\tiny =}t=t'$, then
  $P\proves{\tiny =}s[t']=r[t']$.
\end{enumerate}

The relation \emph{computed by} $(P,\mathsf{f})$ is $\{(\vec{n},m) \mid
P\proves{\tiny =}\mathsf{f}(\vec{\bar{n}}) = \bar{m}\}$ (as usual,
$\bar{n}$ is a numeral for a number $n$, consisting of $n$ occurrences
of $\mathsf{S}$ applied to $\mathsf{0}$). This relation does not have to be a function.
Let us call $P$ \emph{coherent} if $P\not\proves{\tiny
  =}\bar{m}=\bar{n}$ for two distinct numerals $\bar{m}$ and
$\bar{n}$. It is easy to see that the relation computed by a coherent
program is a partial function.

However, even for a coherent program $P$ the theory $\A{P}$ can be
inconsistent because of the separation axioms. This is the case, for
example, for
$P=\{\mathsf{f}(\mathsf{g}(\mathsf{0}))=\mathsf{S}(\mathsf{g}(\mathsf{0})),
\mathsf{f}(x)=\mathsf{g}(\mathsf{0})\}$. Call a
program $P$ \emph{strongly coherent} if $\A{P}$ is consistent. It is
clear that if a program is strongly coherent, then it is coherent.

Later it will be important that a program containing a
functional symbol $\mathsf{f}$ corresponding to a primitive recursive
function $f$ also contains all defining equations for $f$.
Programs that satisfy this property are called \emph{full}.

A term is called \emph{basic} if it consists of $\mathsf{0}$, $\mathsf{S}$
and variables only. A term is called \emph{primitive recursive} if it is in
the language of $\PR$. We write
$T\proves{b}\Gamma\Rightarrow A$ (respectively,
$T\proves{pr}\Gamma\Rightarrow A$) if there is a classical natural
deduction derivation of $A$ from open assumptions $\Gamma$ in $T$ where the
eigenterms of the rules of universal elimination and existential
introduction (i.e., terms $t$ in the rules $({\forall}E)$ and
$({\exists}I)$ in Fig.~1) are basic (respectively, primitive recursive). If
$\Gamma$ is empty, we write $T\proves{b} A$ or $T\proves{pr}A$.

A function $f$ is called provable with basic terms if $f$ is computed by a
strongly coherent full program $(P,\mathsf{f})$ and
$\A{P}\proves{b}\forall\vec{x}\,\exists y\;\mathsf{f}(\vec{x})=y$, and
similarly for a function provable with primitive recursive terms.

\section{Provably recursive functions are provable with basic terms}
\label{sec:3}

In this section, we prove one direction of the main result.

\begin{lemma}
\label{lemma}
\leavevmode
\begin{enumerate}
\item $\A{\PR}\proves{b}\forall\vec{x}\,\exists y\;\mathsf{f}(\vec{x})=y$
for every functional symbol $\mathsf{f}$ from $\PR$.
\item $\A{\PR}\proves{b}\forall\vec{x}\,\exists y\;t[\vec{x}]=y$
for every primitive recursive term $t[\vec{x}]$.
\item If $\A{\PR}\vdash A$, then $\A{\PR}\proves{b}A$ for every formula $A$.
\end{enumerate}
\end{lemma}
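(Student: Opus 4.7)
The plan is to prove the three parts in order, each building on the previous. The core idea throughout is that $\exists E$ lets one introduce a fresh variable as a name for a term whose totality is already established, so any non-basic eigenterm in $\forall E$ or $\exists I$ can ultimately be replaced by a variable.

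For part 1, I would proceed by induction on the definition of the primitive recursive symbol $\mathsf{f}$. For the base constructors (zero, successor, projections) the existential witness is already a basic term built from the variables $\vec{x}$, so a single application of $\exists I$ suffices. For composition $\mathsf{h}(\vec{x}) = \mathsf{f}(\mathsf{g}_1(\vec{x}),\dots,\mathsf{g}_k(\vec{x}))$, the induction hypothesis yields $\exists y_i\;\mathsf{g}_i(\vec{x}) = y_i$ for each $i$; opening $\exists E$ on each to fix fresh variables $y_i$, then applying the induction hypothesis on $\mathsf{f}$ to obtain $\exists y\;\mathsf{f}(y_1,\dots,y_k) = y$ and opening another $\exists E$, the defining equation of $\mathsf{h}$ and the equality rule rewrite $\mathsf{h}(\vec{x})$ to $y$, after which $\exists I$ on the basic term $y$ closes the proof. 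For primitive recursion, I would apply the induction schema to the formula $\exists y\;\mathsf{h}(n,\vec{x}) = y$ with induction variable $n$: the base case uses totality of $\mathsf{g}$, and the step case names $\mathsf{h}(n,\vec{x})$ by a fresh $y_1$ via $\exists E$ (invoking the inductive hypothesis of the induction schema) and then names $\mathsf{f}(n,y_1,\vec{x})$ by a fresh $y_2$ via the outer induction hypothesis on $\mathsf{f}$.

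Part 2 follows by induction on the term structure using part 1. Variables and $\mathsf{0}$ are immediate. For $t = \mathsf{f}(t_1,\dots,t_k)$, the induction hypothesis gives $\exists y_i\;t_i = y_i$; opening $\exists E$ to fix fresh $y_i$, part 1 supplies $\exists y\;\mathsf{f}(y_1,\dots,y_k) = y$, and the equality rule plus $\exists I$ on the basic term $y$ yield $\exists y\;t[\vec{x}] = y$.

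Part 3 is the technical heart and I expect it to be the main obstacle, because of the syntactic bookkeeping around freshness and substitution. The plan is to transform a given $\A{\PR}$-derivation by eliminating each application of $\forall E$ or $\exists I$ whose eigenterm $t$ is a non-basic primitive recursive term. For a step $\forall x\;B[x] \vdash B[t]$, by part 2 derive $\exists y\;t = y$ with basic eigenterms and open $\exists E$ with fresh $y$ and hypothesis $t = y$; apply $\forall E$ with the basic term $y$ to obtain $B[y]$; symmetry of equality (derivable from reflexivity and the substitution rule) yields $y = t$, and the equality rule rewrites $B[y]$ into $B[t]$, which $\exists E$ then discharges. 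The case of $\exists I$ on a non-basic $t$ is dual: given $A[t]$, use $\exists y\;t = y$ and $\exists E$ to fix $t = y$, rewrite $A[t]$ to $A[y]$ via equality, and apply $\exists I$ with the basic eigenterm $y$. Iterating this local transformation across the whole derivation removes all non-basic eigenterms; the care required lies in choosing the auxiliary variables sufficiently fresh so as not to capture free variables of $B$, $A$, or the surrounding open assumptions, and in checking that each use of the equality rule targets exactly the intended occurrences.
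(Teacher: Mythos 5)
Your proposal is correct and takes essentially the same route as the paper: part 1 by induction on the primitive recursive definition (with the composition and recursion cases handled exactly as you describe), part 2 by induction on term structure using part 1, and part 3 by induction on the derivation, eliminating each non-basic eigenterm $t$ of $(\forall E)$ or $(\exists I)$ via $\exists y\;t=y$ from part 2 followed by an $(\exists E)$ and equality-rule detour. The paper states part 3 only as a one-line sketch, and your local-transformation argument, including the freshness bookkeeping and the derivability of symmetry of equality, is precisely the intended elaboration.
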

\begin{proof}
1. By induction on the definition of the primitive recursive function $f$
corresponding to the functional symbol $\mathsf{f}$. If it is one of the base
functions, i.e., zero, addition of one or a projection, then the claim is
obvious. Suppose that $f$ is defined by composition, e.g., $\mathsf{f}(x)=\mathsf{h}(\mathsf{g}(x))$.
By induction hypothesis, we know that
\[
\A{\PR}\proves{b}\forall x\,\exists y\;\mathsf{g}(x)=y
\]
and
\begin{equation}
\label{eq:h-provable}
\A{\PR}\proves{b}\forall y\,\exists z\;\mathsf{h}(y)=z
\end{equation}
Given $x$, we can use $y$
such that $\mathsf{g}(x)=y$ to perform universal elimination on~\eqref{eq:h-provable}
and then use equality rules to derive $\exists z\;\mathsf{h}(\mathsf{g}(x))=z$ and $\exists z\;\mathsf{f}(x)=z$.

Suppose $f(\vec{x},y)$ is defined by primitive recurrence on $y$.
Then it is easy to prove $\forall y\,\exists z\;\mathsf{f}(\vec{x},y)=z$ by induction on $y$.

2. By induction on $t$, using point~1 in the induction step.

3. By induction on the derivation, using point~2 for $(\forall E)$ and $(\exists I)$.
\end{proof}

\begin{theorem}
All provably recursive functions of $\A{\PR}$ are provable with basic terms.
\end{theorem}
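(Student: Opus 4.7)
The plan is to combine Kleene's normal form theorem with Lemma~\ref{lemma}. Suppose $f$ is a provably recursive function of $\A{\PR}$. Then there is a Turing machine code $e$ computing $f$ with $\A{\PR}\vdash\forall\vec{x}\,\exists y\;T(e,\vec{x},y)$, where $T$ is the Kleene predicate (expressible as $\chi_T(e,\vec{x},y)=0$ for the primitive recursive characteristic function $\chi_T$, which is a function symbol of $\PR$). By part 3 of Lemma~\ref{lemma}, this derivation can be converted to one using only basic eigenterms, so $\A{\PR}\proves{b}\forall\vec{x}\,\exists y\;\chi_T(e,\vec{x},y)=\mathsf{0}$.

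Next, I would construct the program. Let $P$ be $\PR$ augmented with two new function symbols $\mathsf{f}$ and $\mathsf{g}$ and the defining equations
\[
\mathsf{g}(\vec{x},y,\mathsf{0}) = \mathsf{U}(y),\quad
\mathsf{g}(\vec{x},y,\mathsf{S}(n)) = \mathsf{g}(\vec{x},\mathsf{S}(y),\chi_T(e,\vec{x},\mathsf{S}(y))),\quad
\mathsf{f}(\vec{x}) = \mathsf{g}(\vec{x},\mathsf{0},\chi_T(e,\vec{x},\mathsf{0})),
\]
where $\mathsf{U}$ is the primitive recursive output-extraction symbol. This $P$ is full (it contains $\PR$), and it is easy to check by the equational rules that it computes $f$. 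Strong coherence is established by exhibiting a model: the standard model of arithmetic interprets $\PR$, and we interpret $\mathsf{g}(\vec{x},y,k)$ as $\mathsf{U}(y)$ if $k=0$, as $\mathsf{U}(\mu y'{>}y.\,\chi_T(e,\vec{x},y')=0)$ if $k>0$ and such a $y'$ exists, and as $0$ otherwise (where then $\mathsf{f}(\vec{x})$ is $f(\vec{x})$); a short case analysis verifies that both $\mathsf{g}$-equations hold universally, so $\A{P}$ is consistent.

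The heart of the proof is then to derive $\A{P}\proves{b}\forall\vec{x}\,\exists z\;\mathsf{f}(\vec{x})=z$. Starting from $\forall\vec{x}\,\exists y\;\chi_T(e,\vec{x},y)=\mathsf{0}$, I would use the least-number principle (provable from induction, and by Lemma~\ref{lemma}(3) also provable with basic terms) to pick the least witness $y_0$. Then, using the $\mathsf{g}$-equations, one shows by induction on a suitable natural-number parameter (for instance, on $n$ such that $\chi_T(e,\vec{x},y+n)=\mathsf{0}$) that $\exists z\;\mathsf{g}(\vec{x},y,\chi_T(e,\vec{x},y))=z$; specialising $y$ to $\mathsf{0}$ yields $\exists z\;\mathsf{f}(\vec{x})=z$. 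Throughout, every eigenterm needed for $(\forall E)$ or $(\exists I)$ is either a variable or a numeral: wherever the natural proof would want to introduce a primitive recursive term $t$ (such as $\mathsf{U}(y_0)$ or $\chi_T(e,\vec{x},y)$) for $(\exists I)$, parts 1 and 2 of Lemma~\ref{lemma} provide a variable $z$ with $t=z$, and $(\exists I)$ is applied on $z$ instead.

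The main obstacle is the last step: writing the totality proof so that every quantifier-rule eigenterm is basic. The natural informal argument is shot through with primitive recursive terms $\chi_T(e,\vec{x},y)$, $\mathsf{U}(y_0)$, and so on appearing at $(\forall E)$ and $(\exists I)$ steps. Lemma~\ref{lemma} is exactly what lets us detour around each such term by first existentially binding it to a fresh variable and then operating on that variable; the bookkeeping to do this uniformly inside the induction on the search is the only delicate part of the argument.
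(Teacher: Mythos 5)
Your proposal is correct in outline and shares the paper's overall skeleton---reduce to the Kleene predicate in the form $\chi_T(e,\vec{x},y)=\mathsf{0}$, push the $\A{\PR}$-derivation through Lemma~\ref{lemma}.3, build a strongly coherent full program computing $f$, and verify coherence via the standard model---but the program gadget and the ensuing totality derivation are genuinely different. The paper avoids any search: its program consists of the single trick equation $\mathsf{f}(\vec{x})=\mathsf{h}(\mathsf{k}(\mathsf{g}(\vec{x},y),\vec{x},y))$, where $y$ occurs free on the right-hand side only (the paper's $\mathsf{g}$ is your $\chi_T$ and $\mathsf{h}$ is your $\mathsf{U}$), together with $\mathsf{k}(\mathsf{0},\vec{x},y)=y$. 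Once the basic-terms derivation of $\exists y\;\mathsf{g}(\vec{x},y)=\mathsf{0}$ supplies a witness \emph{variable} $y$ via $(\exists E)$, the equality rules alone collapse $\mathsf{f}(\vec{x})$ to $\mathsf{h}(y)$, so no induction inside $\A{P}$ is needed at all; determinism of the machine is used only to make the interpretation of $\mathsf{k}$ well-defined in the model. Your program instead implements the search as an honest recursion on your $\mathsf{g}$, which buys a more natural, executable-looking program at the price of the backwards induction (on the distance $n$ to the witness) that you rightly flag as the delicate step. That induction does go through with basic eigenterms: the formula $\forall y\,(\chi_T(e,\vec{x},y+n)=\mathsf{0}\to\exists z\;\mathsf{g}(\vec{x},y,\chi_T(e,\vec{x},y))=z)$ only ever needs instantiations at variables, $\mathsf{0}$, and $\mathsf{S}(y)$, plus $\A{\PR}$-facts such as $y+\mathsf{S}(n)=\mathsf{S}(y)+n$ and $\forall w\,(w=\mathsf{0}\lor\exists v\;w=\mathsf{S}(v))$, all supplied with basic terms by Lemma~\ref{lemma}, and your model check for the two $\mathsf{g}$-equations is sound. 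Two small remarks: the least-number principle is dead weight, since your induction works from an arbitrary witness (an earlier $y$ with $\chi_T(e,\vec{x},y)=\mathsf{0}$ just terminates the search early); and taking all of $\PR$ into $P$ makes the program infinite, whereas the paper keeps $P$ to the finitely many $\PR$-equations occurring in the two cited derivations---harmless under the paper's definition of a program as a set of equations, but worth tightening if programs are intended to be finite objects.
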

\begin{proof}
Suppose that $f(\vec{x})$ is provably recursive, i.e.,
$\A{\PR}\vdash\forall\vec{x}\,\exists y\;T(e,\vec{x},y)$ for some Turing
machine with code $e$ that computes $f$. It is well-known that $T$ is a
primitive recursive relation, so we can assume that $T(e,\vec{x},y)$ has
the form $\mathsf{g}(\vec{x},y)=0$ where $\mathsf{g}$ is the functional
symbol for some primitive recursive function $g$. Let $h(y)$ be the
primitive recursive function that extracts the final result from a
computation trace with code $y$. Since the machine computing $f$ is
deterministic, for each $\vec{x}$ we have $g(\vec{x},y)=0$ for exactly one
$y$.

By Lemma~\ref{lemma}.3,
$\A{\PR}\proves{b}\forall\vec{x}\,\exists
y\;\mathsf{g}(\vec{x},y)=\mathsf{0}$. Also, by Lemma~\ref{lemma}.1,
$\A{\PR}\proves{b}\forall y\,\exists z\;\mathsf{h}(y)=z$. Let $P$ be the
minimal full program containing equalities from $\PR$ for all primitive
recursive functional symbols used in these derivations, plus the
following equalities.
\begin{align*}
&\mathsf{f}(\vec{x})=\mathsf{h}(\mathsf{k}(\mathsf{g}(\vec{x},y),\vec{x},y))\\
&\mathsf{k}(\mathsf{0},\vec{x},y)=y
\end{align*}
The following is an outline of a derivation of
$\forall\vec{x}\,\exists z\;\mathsf{f}(\vec{x})=z$ in $\A{P}$. Given some $\vec{x}$, let
$y$ be such that $\mathsf{g}(\vec{x},y)=\mathsf{0}$ and let $z$ be such
that $\mathsf{h}(y)=z$. Then
$\mathsf{k}(\mathsf{g}(\vec{x},y),\vec{x},y)=y$, so
$\mathsf{f}(\vec{x})=\mathsf{h}(y)=z$.

It is left to show that $P$ is strongly coherent and computes $f$.
If $\mathsf{f}$ is interpreted by $f$ and $\mathsf{k}$ is
interpreted by the total function
\[
k(z,\vec{x},u)=
\begin{cases}
u& \text{if $z=0$,}\\
y\text{ such that }g(\vec{x},y)=0 & \text{otherwise}
\end{cases}
\]
then $\mathbb{N}\models P$; therefore,
$\A{P}$ is consistent. Further, for every $\vec{m},n$, if $f(\vec{m})=n$
then $P\proves{\tiny =}\mathsf{f}(\vec{\bar{m}})=\bar{n}$. On the other
hand, if $f(\vec{m})\ne n$, then $P\not\proves{\tiny
=}\mathsf{f}(\vec{\bar{m}})=\bar{n}$ because $f$ is total and $P$ is strongly coherent.
\end{proof}

\section{Functions that are provable with basic terms are provably recursive}
\label{sec:4}

To remind, under the assumption
$\A{P}\proves{b}\forall\vec{x}\,\exists y\;\mathsf{f}(\vec{x})=y$ we have
to prove that $f$ is provably recursive according to the definition of
Sect.~\ref{sec:1}, not that
$\A{P}\vdash\forall\vec{x}\,\exists y\;\mathsf{f}(\vec{x})=y$, which is
trivial. We will prove this statement indirectly, using intrinsic
theories~\cite{leivant:2002}.

The intrinsic theory of natural numbers, $\itn$, is a first-order
theory with equality whose vocabulary has functional symbols $\mathsf{0}$, $\mathsf{S}$
and a unary predicate symbol $\mathsf{N}$. The additional inference rules are:
\[
\frac{}{\mathsf{N}(\mathsf{0})} \qquad
\frac{\mathsf{N}(t)}{ \mathsf{N}(\mathsf{S}t)} \qquad
\frac{\mathsf{N}(t)\quad A[\mathsf{0}]\quad \forall x\;(A[x]\to A[\mathsf{S}x])}{ A[t]}\enspace.
\]
The variant of intrinsic theory that we are using, called discrete
intrinsic theory and denoted by $\itnd$ in~\cite{leivant:2002}, also
includes the separation axioms. Note that $\itnd$ uses regular first-order
quantifier rules.

A function $f$ is called provable in $\itnd$ if it is computed by a
strongly coherent program $(P,\mathsf{f})$ and
$\itnd,\forall P\vdash\forall \vec{x}\;(\mathsf{N}(\vec{x})\to \mathsf{N}(\mathsf{f}(\vec{x})))$.

The following theorem is proved in~\cite{leivant:2002}.
\begin{theorem}
\label{theorem5}
A function is provably recursive in $\A{\PR}$ iff it is provable
in $\itnd$.
\end{theorem}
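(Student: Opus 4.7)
The plan is to prove both implications, using rather different techniques for each direction.

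\textbf{Forward direction} (provably recursive in $\A{\PR}$ implies provable in $\itnd$): Starting from the $\A{\PR}$-proof that $\forall\vec{x}\,\exists y\;T(e,\vec{x},y)$ for a Turing machine code $e$ computing $f$, I would build a program $P$ in the same style as in Section~\ref{sec:3}: include all PR equations used in the proof plus two equations defining $\mathsf{f}$ via the trace-extraction function. To lift the $\A{\PR}$-proof into $\itnd$, apply the relativisation translation $A \mapsto A^{\mathsf{N}}$ that bounds every quantifier to $\mathsf{N}$. Two facts make this sound: first, $\itnd$ proves $\forall \vec{x}\,(\mathsf{N}(\vec{x}) \to \mathsf{N}(\mathsf{g}(\vec{x})))$ for every PR symbol $\mathsf{g}$, by external induction on the PR definition, using the $\mathsf{N}$-induction rule at each recurrence step; second, every instance of the induction schema of $\A{\PR}$ for a formula $A$ translates to an application of $\mathsf{N}$-induction for $A^{\mathsf{N}}$. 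The separation axioms and equality rules translate directly.

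\textbf{Backward direction} (provable in $\itnd$ implies provably recursive in $\A{\PR}$): Given a derivation of $\itnd,\forall P \vdash \forall \vec{x}\,(\mathsf{N}(\vec{x}) \to \mathsf{N}(\mathsf{f}(\vec{x})))$, I would interpret $\itnd$ inside $\A{\PR}$. Let $\mathrm{Eval}_P(t,n)$ be the PR predicate ``there is an equational derivation from $P$ of $t = \bar{n}$''; interpret $\mathsf{N}(t)$ as $\exists n\,\mathrm{Eval}_P(t,n)$, and interpret each program function symbol by a PR approximation obtained from a bounded search through equational derivations. Under this translation, the axioms $\forall P$ become theorems, the two $\mathsf{N}$-introduction rules are immediate, and the $\mathsf{N}$-induction rule is simulated by ordinary $\A{\PR}$-induction on the evaluation witness. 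The target formula then translates to $\forall \vec{x}\,\exists y\;\mathsf{f}(\vec{x}) = y$ in $\A{\PR}$, and strong coherence of $P$ guarantees that the computed witness agrees with the true value of $f$; provable recursiveness follows by the standard construction of a Turing machine simulating equational computation.

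The hard step is the backward direction, because program function symbols are not total in general and so have no direct PR interpretation: one must keep all facts about them relativised to the ``has a value'' predicate and verify that every inference rule of $\itnd$, including the full classical first-order rules over the enriched vocabulary and the equality rules acting on possibly non-denoting subterms, is sound under the interpretation. An alternative and heavier strategy would be to apply cut-elimination or normalisation to the $\itnd$-derivation and read off an algorithm together with an ordinal bound on its run-time, with termination provable in $\A{\PR}$ via the ordinal analysis of $\itnd$; this avoids the partial-function subtleties at the cost of substantial proof-theoretic machinery.
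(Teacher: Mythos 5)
The first thing to note is that the paper does \emph{not} prove this theorem: it is imported verbatim from Leivant~\cite{leivant:2002} (``The following theorem is proved in...''), so there is no in-paper proof to compare against and your sketch has to stand on its own. Your \textbf{forward} direction is essentially sound, and in fact is recoverable from the paper's own machinery: since every term of $\A{\PR}$ is primitive recursive, $\A{\PR}\vdash A$ trivially gives $\A{\PR}\proves{pr}A$, so Lemma~\ref{lemma7} (resting on Lemma~\ref{lemma6}) yields the relativized totality statement in $\itnd$, and the $\mathsf{k}$-equation trick of Section~\ref{sec:3} (which you correctly reuse) supplies a strongly coherent full program and the derivation of $\mathsf{N}(\mathsf{f}(\vec{x}))$. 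This matches the relativization route Leivant takes.

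The \textbf{backward} direction --- the only direction the paper actually uses, in its final theorem --- is where your sketch has genuine gaps. First, ``interpret each program function symbol by a PR approximation obtained from a bounded search through equational derivations'' cannot work as stated: no bound on derivation size is available, and a total PR interpretation of the program symbols generally does not exist (if $\mathsf{f}$ itself had one, $f$ would be primitive recursive and the theorem would be false, since $\A{\PR}$ proves total non-PR functions recursive). The function symbols must instead be treated syntactically or via their $\Sigma_1$ graphs, with atomic formulas $t=s$ unfolded through the evaluation predicate as a \emph{partial} (Kleene-style) equality, taking care that the reflexivity rule $t=t$ of $\itnd$, which holds for non-denoting $t$, remains valid under the translation. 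Second, and more seriously, you assert the separation axioms ``translate directly,'' but under a derivability-based reading of equality the axiom $\forall x\;\mathsf{S}(x)\ne\mathsf{0}$ becomes roughly ``$P$ never derives $\mathsf{S}(t)=\mathsf{0}$,'' a $\Pi_1$ statement amounting to coherence of $P$; it is \emph{true} by strong coherence, but strong coherence is consistency of $\A{P}$ and is not $\A{\PR}$-provable in general, by G\"odel's second incompleteness theorem. Arranging the atomic translation so that separation is $\A{\PR}$-\emph{provably} valid while the axioms $\forall P$ (including equations with non-denoting subterms) remain valid is precisely the crux of this direction, and your sketch does not engage with it. Your fallback via cut-elimination and ``the ordinal analysis of $\itnd$'' is unsubstantiated --- no such analysis is cited or outlined --- and is in any case disproportionate to what the theorem needs.
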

Thus, it is enough to show that functions provable with basic terms are
provable in $\itnd$. In fact, we can show that functions provable with
primitive recursive terms are provable in $\itnd$.

Let us introduce some notation. If $A$ is a formula, then $A^{\mathsf{N}}$ denotes
$A$ with all quantifiers relativized to $\mathsf{N}$, i.e., having all
subformulas of the form $\forall x\;B$ replaced by $\forall
x\;(\mathsf{N}(x)\to B)$ and all subformulas of the form $\exists x\;B$
replaced by $\exists x\;(\mathsf{N}(x)\land B)$. If $\Gamma$ is a set of
formulas, then $\Gamma^{\mathsf{N}}=\{A^{\mathsf{N}}\mid A\in\Gamma\}$. If
$\vec{x}=x_1,\ldots,x_n$, then $\mathsf{N}(\vec{x})$ denotes
$\mathsf{N}(x_1)\land\ldots\land \mathsf{N}(x_n)$.

\begin{lemma}
\label{lemma6}
Let $P$ be a full program and let
$t[\vec{x}]$ be a primitive recursive term in the language of $P$. Then
$\itnd,\forall P\vdash \mathsf{N}(\vec{x})\Rightarrow \mathsf{N}(t[\vec{x}])$.
\end{lemma}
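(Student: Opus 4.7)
The plan is to mimic the proof of Lemma~\ref{lemma} parts~1 and~2, but inside $\itnd$, with the predicate $\mathsf{N}$ playing the role that ``basic-term provability'' played before. I will proceed by induction on the structure of the primitive recursive term $t[\vec{x}]$. The variable case is immediate from the assumption $\mathsf{N}(\vec{x})$, and the case $t=\mathsf{0}$ follows from the axiom $\mathsf{N}(\mathsf{0})$. For the composition case $t=\mathsf{f}(t_1[\vec{x}],\ldots,t_k[\vec{x}])$, where $\mathsf{f}$ is a primitive recursive functional symbol in $P$, the inductive hypothesis gives $\mathsf{N}(t_i[\vec{x}])$ for each $i$, so it suffices to establish the auxiliary claim that for every primitive recursive functional symbol $\mathsf{f}$ occurring in $P$,
\[
\itnd,\forall P\vdash\forall\vec{y}\,(\mathsf{N}(\vec{y})\to\mathsf{N}(\mathsf{f}(\vec{y}))).
\]

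The auxiliary claim is proved by a secondary induction on the primitive recursive definition of the function $f$ corresponding to $\mathsf{f}$; here the fullness of $P$ is essential, since it ensures that the standard defining equations for $f$ (and for all auxiliary primitive recursive functions in its definition) are universally quantified members of $\forall P$. The base cases (zero, successor, projections) follow immediately from the first two rules for $\mathsf{N}$ together with the equality rules. For composition $\mathsf{f}(\vec{y})=\mathsf{h}(\mathsf{g}_1(\vec{y}),\ldots,\mathsf{g}_m(\vec{y}))$, I apply the inductive hypothesis to each $\mathsf{g}_i$ to obtain $\mathsf{N}(\mathsf{g}_i(\vec{y}))$, then to $\mathsf{h}$ to obtain $\mathsf{N}(\mathsf{h}(\mathsf{g}_1(\vec{y}),\ldots,\mathsf{g}_m(\vec{y})))$, and finally rewrite using the instance of the defining equation that is in $\forall P$. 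For primitive recursion with equations $\mathsf{f}(\vec{y},\mathsf{0})=\mathsf{g}(\vec{y})$ and $\mathsf{f}(\vec{y},\mathsf{S}z)=\mathsf{h}(\vec{y},z,\mathsf{f}(\vec{y},z))$, I invoke the $\itnd$-induction rule on $z$ with the formula $A[z]\equiv\mathsf{N}(\mathsf{f}(\vec{y},z))$: the base case reduces by the first defining equation and the inductive hypothesis for $\mathsf{g}$, and the step case combines the second defining equation with the inductive hypothesis for $\mathsf{h}$ applied to $\vec{y},z,\mathsf{f}(\vec{y},z)$.

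The main obstacle is the interplay between the equational axioms in $\forall P$ and the predicate $\mathsf{N}$: at every step where a function symbol is unfolded via its defining equation, one first derives $\mathsf{N}$ of the right-hand side and then uses the congruence property of equality to transport $\mathsf{N}$ across that equation to the left-hand side. Fullness of $P$ is what makes this rewrite step legitimate, and it is also what makes the nested induction on the primitive recursive definition well-founded, since every primitive recursive symbol appearing in the defining equations of $\mathsf{f}$ has a strictly simpler definition. Once the auxiliary claim is established, the outer induction on $t$ closes without further work.
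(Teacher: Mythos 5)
Your overall architecture is the same as the paper's: the paper's proof is a one-line remark that the argument is ``similar to Lemma~\ref{lemma}'', i.e., an outer structural induction on $t$ resting on an auxiliary claim, proved by induction on the primitive recursive definition, that $\mathsf{N}$ is closed under every primitive recursive symbol of $P$, with fullness guaranteeing that all the relevant defining equations are in $\forall P$. However, your treatment of the primitive recursion case has a genuine gap, and it is exactly the point that the paper's proof singles out. You invoke the $\itnd$-induction rule with $A[z]\equiv\mathsf{N}(\mathsf{f}(\vec{y},z))$ and claim the step case follows by applying the inductive hypothesis for $\mathsf{h}$ to the arguments $\vec{y},z,\mathsf{f}(\vec{y},z)$. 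That application requires $\mathsf{N}(\vec{y})$, $\mathsf{N}(z)$ and $\mathsf{N}(\mathsf{f}(\vec{y},z))$: the first is an ambient assumption and the third is $A[z]$, but $\mathsf{N}(z)$ is not available. The step premise of the rule is $\forall x\,(A[x]\to A[\mathsf{S}x])$, where $x$ is a fresh universally quantified variable; since $\itnd$ uses regular, unrelativized quantifier rules, $x$ ranges over arbitrary individuals and you get no $\mathsf{N}(x)$ for free. Nor does having $\mathsf{N}(z)$ as the major premise of the rule help: it is consumed by the rule itself and does not enter the subderivation of the step premise, whose eigenvariable is distinct from $z$.

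The repair is precisely what the paper states: run the induction on the strengthened formula $A[z]\equiv\mathsf{N}(z)\land\mathsf{N}(\mathsf{f}(\vec{y},z))$. Then the step hypothesis supplies $\mathsf{N}(z)$, which both feeds the inductive hypothesis for $\mathsf{h}$ and yields $\mathsf{N}(\mathsf{S}z)$ by the second $\mathsf{N}$-rule, and the desired conclusion is obtained by projecting the second conjunct of $A[z]$. (An alternative fix---first proving the predecessor property $\mathsf{N}(\mathsf{S}z)\to\mathsf{N}(z)$ and inducting on $\mathsf{N}(z)\to\mathsf{N}(\mathsf{f}(\vec{y},z))$---only pushes the same conjunction trick into the predecessor lemma, which again needs an induction formula of the form $\mathsf{N}(x)\land\cdots$.) Everything else in your proposal---the outer induction on $t$, the composition case, and the role of fullness---is correct and matches the paper's intent.
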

\begin{proof}
The proof is similar to Lemma~\ref{lemma}. For example, to show that a
function $f(\vec{x},y)$ defined by primitive recurrence on $y$ is provable,
one needs to use induction on the formula
$\mathsf{N}(y)\land \mathsf{N}(\mathsf{f}(\vec{x},y))$. The fullness of $P$
is necessary to ensure that the induction hypothesis is true of all
subterms of $t$.
\end{proof}

\begin{lemma}
\label{lemma7}
Suppose that $P$ is a full program and $\Gamma\cup\{A\}$ is a set of
formulas whose free variables are among $\vec{x}$. If
$\A{P}\proves{pr}\Gamma\Rightarrow A$ and all primitive recursive
functional symbols in the derivation occur in P, then
$\itnd,\forall P\vdash \mathsf{N}(\vec{x}),\Gamma^{\mathsf{N}}\Rightarrow
A^{\mathsf{N}}$.
\end{lemma}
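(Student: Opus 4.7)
The plan is to proceed by induction on the derivation $\A{P}\proves{pr}\Gamma\Rightarrow A$, showing for every rule that the relativized sequent is derivable in $\itnd,\forall P$. Before starting the induction, I would strengthen the claim: for every sub-derivation, prove the corresponding relativized sequent where $\vec{x}$ is any list containing \emph{all} free variables currently in scope (not only those of the final $\Gamma\cup\{A\}$). This allows the eigenvariables of $(\forall I)$ and $(\exists E)$ to be appended to $\vec{x}$ when entering a sub-derivation and dropped after the rule is applied, so that the invariant $\mathsf{N}(\vec{x})$ always covers every free variable in play.

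For the base cases, each axiom of $\A{P}$ has a relativized counterpart provable in $\itnd,\forall P$: equations from $P$ are quantifier-free, so $E^{\mathsf{N}}=E$ and the relativization follows by routine instantiation from $\forall P$; the separation axioms are built into $\itnd$; and the induction axiom on $A[x]$ relativizes to an induction statement on $A[x]^{\mathsf{N}}$, which is obtained by feeding the relativized premises into the induction rule of $\itnd$. Equality rules and propositional inferences are routine since none of them touch quantifiers.

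The informative cases are the quantifier rules. For $(\forall I)$, the strengthened induction hypothesis applied with $\vec{x}$ extended by the eigenvariable $y$ gives $\mathsf{N}(\vec{x}),\mathsf{N}(y),\Gamma^{\mathsf{N}}\Rightarrow A[y]^{\mathsf{N}}$; discharging $\mathsf{N}(y)$ and applying $({\forall}I)$ of $\itnd$ yields $\forall x(\mathsf{N}(x)\to A[x]^{\mathsf{N}}) = (\forall x\,A[x])^{\mathsf{N}}$. The rule $(\exists E)$ is handled symmetrically, with $\mathsf{N}(y)$ now obtained from the relativized existential witness. The critical cases are $(\forall E)$ and $(\exists I)$, which are exactly where the restriction to primitive recursive eigenterms is used. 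For $(\forall E)$, the induction hypothesis furnishes $\forall x(\mathsf{N}(x)\to A[x]^{\mathsf{N}})$; instantiating with the same primitive recursive term $t$ gives $\mathsf{N}(t)\to A[t]^{\mathsf{N}}$. Lemma~\ref{lemma6}, applicable because $t$ is primitive recursive and every primitive recursive symbol in the derivation is assumed to lie in the full program $P$, provides $\mathsf{N}(\vec{z})\Rightarrow\mathsf{N}(t)$ where $\vec{z}$ are the free variables of $t$; since $\vec{z}$ is already contained in $\vec{x}$, modus ponens delivers $A[t]^{\mathsf{N}}$. The rule $(\exists I)$ is symmetric: from $A[t]^{\mathsf{N}}$ together with $\mathsf{N}(t)$ (again via Lemma~\ref{lemma6}) one conjoins and applies $(\exists I)$ of $\itnd$ to conclude $\exists x(\mathsf{N}(x)\land A[x]^{\mathsf{N}})$.

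The main obstacle is precisely $(\forall E)$ and $(\exists I)$: these are the only steps where passing from ordinary provability to relativized provability is nontrivial, and Lemma~\ref{lemma6} together with the fullness of $P$ and the assumption that every primitive recursive symbol in the derivation occurs in $P$ is exactly what makes them go through. The free-variable bookkeeping is a secondary technicality, uniformly resolved by the strengthened induction hypothesis that maintains an $\mathsf{N}$-assumption for every free variable in scope.
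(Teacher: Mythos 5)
Your proposal takes essentially the same route as the paper's proof: induction on the derivation, with the critical cases $(\forall E)$ and $(\exists I)$ handled by discharging the guard $\mathsf{N}(t)$ via Lemma~\ref{lemma6} --- which is exactly where fullness of $P$ and the hypothesis that all primitive recursive symbols of the derivation occur in $P$ enter --- and your explicit strengthening of the induction hypothesis to carry an $\mathsf{N}$-assumption for every free variable in scope is a sound way of making precise what the paper dismisses as the ``straightforward'' $(\forall I)$ and $(\exists E)$ cases.

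One step, however, is underspecified in a way that matters. For the induction axiom you say its relativization ``is obtained by feeding the relativized premises into the induction rule of $\itnd$'' with induction formula $A[x]^{\mathsf{N}}$. Taken literally this fails: the induction rule of $\itnd$ demands the unguarded step premise $\forall y\,(A[y]\to A[\mathsf{S}y])$, whereas the relativized hypothesis only supplies the guarded step $\forall y\,(\mathsf{N}(y)\to B^{\mathsf{N}}[y]\to B^{\mathsf{N}}[\mathsf{S}y])$, and $\mathsf{N}(y)$ is not available inside the step if you induct on $B^{\mathsf{N}}[y]$ alone. The repair --- the one the paper uses, and the same trick it invokes in the proof of Lemma~\ref{lemma6} --- is to apply the induction rule to the strengthened formula $\mathsf{N}(y)\land B^{\mathsf{N}}[y]$: the base case $\mathsf{N}(\mathsf{0})\land B^{\mathsf{N}}[\mathsf{0}]$ uses the $\mathsf{N}(\mathsf{0})$ rule, and in the step $\mathsf{N}(\mathsf{S}y)$ is recovered from $\mathsf{N}(y)$ by the successor rule, while the conjunct $\mathsf{N}(y)$ unlocks the guarded hypothesis. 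With that one correction your argument coincides with the paper's proof.
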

\begin{proof}
The proof is by induction on the derivation. If $A$ is an axiom of
$\A{P}$ other than induction, then $\itnd,\forall P\vdash A$ and $A\vdash A^{\mathsf{N}}$. The only other cases
that need attention are those dealing with quantifiers and induction.

If $A[t]$ is derived from $\forall y\;A[y]$, then by induction hypothesis, $\forall
y\;(\mathsf{N}(y)\to A^{\mathsf{N}}[y])$ is derivable. Since $t$ is a primitive recursive
term in the language of $P$, $\mathsf{N}(t)$ is derivable by
Lemma~\ref{lemma6}, so $A^{\mathsf{N}}[t]$ is derivable as well. The case of
$(\exists I)$ is similar. The cases of $(\forall I)$ and $(\exists E)$
are also straightforward.

The relativized version of the induction axiom is
\[
B^{\mathsf{N}}[\mathsf{0}]\to\forall y\;(\mathsf{N}(y)\to B^{\mathsf{N}}[y]\to B^{\mathsf{N}}[\mathsf{S}y])\to\forall y\;(\mathsf{N}(y)\to
B^{\mathsf{N}}[y])\enspace.
\]
It is proved by induction in $\itnd$ for the formula $\mathsf{N}(y)\land B^{\mathsf{N}}[y]$.
\end{proof}

\begin{theorem}
All functions provable with
primitive recursive terms are provably recursive.
\end{theorem}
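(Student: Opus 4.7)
The plan is to reduce the statement to Theorem~\ref{theorem5} by showing that any function provable with primitive recursive terms is provable in $\itnd$, and the bridge for this is Lemma~\ref{lemma7}. So suppose $f$ is provable with primitive recursive terms, i.e., there is a strongly coherent full program $(P,\mathsf{f})$ computing $f$ together with a derivation $\A{P}\proves{pr}\forall\vec{x}\,\exists y\;\mathsf{f}(\vec{x})=y$. I would first make a small housekeeping extension of $P$: enlarge it to a full program $P'$ that additionally contains the standard defining equations of every primitive recursive symbol appearing anywhere in the given derivation. Since the new equations only define fresh symbols by standard primitive recursion, $\mathbb{N}$ can be expanded to a model of $P'$, so $\A{P'}$ is consistent, i.e., $P'$ is strongly coherent; $P'$ still computes $f$ because we have not touched the equations for $\mathsf{f}$; and $P'$ is full by construction. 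The same derivation witnesses $\A{P'}\proves{pr}\forall\vec{x}\,\exists y\;\mathsf{f}(\vec{x})=y$.

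Now I apply Lemma~\ref{lemma7} to this derivation (with $\Gamma$ empty), which yields
\[
\itnd,\forall P'\vdash(\forall\vec{x}\,\exists y\;\mathsf{f}(\vec{x})=y)^{\mathsf{N}},
\]
that is,
\[
\itnd,\forall P'\vdash\forall\vec{x}\,\bigl(\mathsf{N}(\vec{x})\to\exists y\,(\mathsf{N}(y)\land\mathsf{f}(\vec{x})=y)\bigr).
\]
From the inner conjunction $\mathsf{N}(y)\land\mathsf{f}(\vec{x})=y$, a single application of the equality rule (replacing $y$ by $\mathsf{f}(\vec{x})$ inside $\mathsf{N}(y)$) gives $\mathsf{N}(\mathsf{f}(\vec{x}))$. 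Discharging the existential assumption in the standard way, we obtain
\[
\itnd,\forall P'\vdash\forall\vec{x}\,(\mathsf{N}(\vec{x})\to\mathsf{N}(\mathsf{f}(\vec{x}))).
\]

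This is precisely the statement that $f$ is provable in $\itnd$ via the strongly coherent program $(P',\mathsf{f})$. Invoking Theorem~\ref{theorem5} gives that $f$ is provably recursive in $\A{\PR}$, which was the goal. No step is really an obstacle beyond being careful: the only thing that requires attention is the preliminary extension $P\to P'$, since Lemma~\ref{lemma7} assumes every primitive recursive symbol used in the derivation already lives in the program, and one must check that this extension preserves coherence, fullness, and the computed function. Everything else is a direct application of Lemma~\ref{lemma7}, elementary equality reasoning, and Theorem~\ref{theorem5}.
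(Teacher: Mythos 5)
Your core argument is exactly the paper's proof: apply Lemma~\ref{lemma7} with empty $\Gamma$ to the derivation $\A{P}\proves{pr}\forall\vec{x}\,\exists y\;\mathsf{f}(\vec{x})=y$, pass from $\exists y\,(\mathsf{N}(y)\land\mathsf{f}(\vec{x})=y)$ to $\mathsf{N}(\mathsf{f}(\vec{x}))$ by an equality step, and conclude via Theorem~\ref{theorem5}. That part is correct and is literally what the paper does.

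The one place you deviate --- the preliminary extension $P\to P'$ --- is both unnecessary and, as you justify it, flawed. Unnecessary: $\A{P}$ is by definition a theory in the language ${\cal L}$ (the symbols of $P$ plus $\mathsf{0},\mathsf{S}$); its induction axioms and equality rules are stated only for ${\cal L}$-formulas and ${\cal L}$-terms, so a derivation in $\A{P}$ cannot mention a functional symbol outside ${\cal L}$. Hence every primitive recursive symbol occurring in the given derivation already occurs in $P$, and fullness of $P$ (which is part of the hypothesis ``provable with primitive recursive terms'') supplies their defining equations --- the side condition of Lemma~\ref{lemma7} is met by $P$ itself, which is why the paper applies the lemma directly. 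Flawed as justified: strong coherence of $P$ means only that $\A{P}$ is \emph{consistent}; it does not give you $\mathbb{N}\models P$ (the equations of $P$ may define non-total functions and need not be satisfiable over the standard naturals at all), so there is no standard model to ``expand'' to a model of $P'$. To repair the step you would have to expand an arbitrary, possibly nonstandard, model of $\A{P}$ by interpretations of the new symbols satisfying their recursion equations \emph{while preserving induction for all formulas of the enlarged language} --- far from obvious, especially since ${\cal L}$ need not contain addition or multiplication, so no sequence coding is available. Likewise, ``$P'$ still computes $f$'' would need a conservativity argument for $\proves{=}$, not just the observation that the equations for $\mathsf{f}$ are untouched. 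Fortunately the remedy is simply to delete the extension step; what remains coincides with the paper's proof.
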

\begin{proof}
Let $f$ be computed by a strongly coherent full program $(P,\mathsf{f})$
and let $\A{P}\proves{pr}\forall\vec{x}\,\exists y\;\mathsf{f}(\vec{x})=y$.
Then by Lemma~\ref{lemma7},
$\itnd,\forall P\vdash\forall\vec{x}\;(\mathsf{N}(\vec{x})\to\exists
y\;\mathsf{N}(y)\land \mathsf{f}(\vec{x})=y)$.
This implies that
$\itnd,\forall P\vdash\forall\vec{x}\; (\mathsf{N}(\vec{x})\to
\mathsf{N}(\mathsf{f}(\vec{x})))$,
so by Theorem~\ref{theorem5}, $f$ is provably recursive.
\end{proof}

\subsection*{Acknowledgments}
I am grateful to Daniel Leivant, Lev Beklemishev and Tatiana Yavorskaya for
constructive discussion.

\bibliographystyle{eptcs}
\bibliography{pr}

\end{document}